\newcommand{\ket}[1]{|#1\rangle}
\begin{document}
\title{Quantum Algorithm for the Shortest Superstring Problem\thanks{A part of the reported study is funded by RFBR according to the research project No.20-37-70080. The research is funded by the subsidy allocated to Kazan Federal University for the state assignment in the sphere of scientific activities, project No. 0671-2020-0065.}}
%
%
\author{Kamil  Khadiev\inst{1,2} \and
Carlos Manuel Bosch Machado\inst{1}}
\authorrunning{K. Khadiev, C.M. Bosh-Machado}
%
\institute{Institute of Computational Mathematics and Information Technologies, Kazan Federal University, Kremlyovskaya, 35, Kazan, Russia\and
Zavoisky Physical-Technical Institute, FRC Kazan Scientific Center of RAS, Kazan, Russia}
\maketitle              
\begin{abstract}
In this paper, we consider the ``Shortest Superstring Problem''(SSP) or the ``Shortest Common Superstring Problem''(SCS). The problem is as follows. For a positive integer $n$, a sequence of n strings $S=(s^1,\dots,s^n)$ is given. We should construct the shortest string $t$ (we call it superstring) that contains each string from the given sequence as a substring.
The problem is connected with the sequence assembly method for reconstructing a long DNA sequence from small fragments. We present a quantum algorithm with running time $O^*(1.728^n)$. Here $O^*$ notation does not consider polynomials of $n$ and the length of $t$. 

\keywords{quantum algorithms\and shortest superstring\and strings\and DNA assembly.}
\end{abstract}
\section{Introduction}
\label{sec:intro}  

In this paper, we consider the ``Shortest Superstring Problem''(SSP) or the ``Shortest Common Superstring Problem''(SCS). The problem is as follows. For a positive integer $n$, a sequence of n strings $S=(s^1,\dots,s^n)$ is given. We should construct the shortest string t (we call it superstring) that contains each string from the given sequence as a substring.
The problem is connected with the sequence assembly method for reconstructing a long DNA sequence from small fragments \cite{msdd2000}. There are two types of sequence assemble problems. The first one is the Reference-guided genome assembly method that constructs an existing long DNA string from the sequence $S$. For the problem, we should know the string $t$ apriori and check whether we can construct it from $S$. The second one is de-novo assembly; in this problem, we do not have the string $t$ and should construct it using all strings from $S$. The Shortest Superstring Problem is used as one of the main tools for solving de-novo assembly problems \cite{bdfhw95}. The problem has applications in other areas such as virology and immunology (the SSP models the compression of viral genome); the SSP can be used to achieve data compression; in scheduling ( solutions can be used to schedule operations in machines with coordinated starting times), and others. It is known that the Shortest Superstring Problem is an NP-hard problem \cite{m78,m94,m98,v2005}. Researchers interested in approximation algorithm, the best known algorithm is \cite{ks2013}. At the same time, researchers are interested in exact solutions also. The algorithm based on \cite{b62,hk62} have $O^*(2^n)$ running time. If we have a restriction on a length of the strings $s^i$, then there are better algorithms. If a length of strings $s^i$ is at most $3$, then the there is an algorithm \cite{gkm2013} with running time $O^*(1.443^n)$. For a constant $c$, if a length of strings $s^i$ is at most $c$, then there is a randomized algorithm \cite{gkm2014} with running time $O^*(2^{(1-f(c))c})$ where $f(c)= 1/(1 + 2c^2)$.

We investigate the problem from the quantum computing \cite{nc2010,a2017} point of view. There are many problems where quantum algorithms outperform the best-known classical algorithms. Some of them can be founded here \cite{dw2001,quantumzoo,ks2019,kksk2020,kks2019,gnbk2021}. 
Problems for strings are examples of such problems \cite{ki2019,kk2021,rv2003,l2020,aj2021,m2017}. One of the most important performance metrics in this regard is \emph{query complexity}; and we investigate problems using this metric for complexity.
Researchers investigated quantum algorithms for different problems that can be represented as an assembly of a long string $t$ from the collection of strings $S$ if $t$ is known a priori \cite{kr2021a,kr2021b}.

We present a quantum algorithm for SSP with running time $O^*(1.728^n)$. Here $O^*$ notation does not consider a log factor; in other words, polynomials of $n$ and the length of $t$. The algorithm is based on Grover’s search algorithm \cite{g96} and the Dynamic programming approach for a Boolean cube \cite{abikpv2019,b62,hk62}. As far as we know, our algorithm is the first quantum algorithm for SSP.

The structure of this paper is the following. Section \ref{sec:prelims} contains preliminaries. We present the algorithm in Section \ref{sec:algo}. Section \ref{sec:conclusion} concludes the paper.

\section{Preliminaries}\label{sec:prelims}

Let us consider a string $u=(u_1,\dots,u_m)$. Let $u[i,j]$ be a substring $(u_i,\dots,u_j)$ for $1\leq i\leq j\leq m$. Let $|u|$ be a length of a string $u$.

\subsection{Shortest Superstring Problem} The problem is as follows. For positive integers $n$ and $k$, a sequence of $n$ strings $S=(s^1,\dots,s^n)$ is given. The length of each string $|s^i|=k$ for $i\in\{1,\dots,n\}$.  We should construct the shortest string $t$ (we call it superstring), i.e. $|t|$ is the minimal possible such that each $s^i$ is substring of $t$ for $i\in\{1,\dots,n\}$. In there words, for each $i\in\{1,\dots,n\}$ there are $1\leq \ell,r\leq |t|$ such that $t[\ell,r]=s^i$.  

\subsection{Quantum Query Model}
We use the standard form of the quantum query model. 
Let $f:D\rightarrow \{0,1\},D\subseteq \{0,1\}^N$ be an $N$ variable function. An input for the function is $x=(x_1,\dots,x_N)\in D$ where $x_i\in\{0,1\}$ for $i\in\{1,\dots,N\}$.

We are given oracle access to the input $x$, i.e. it is realized by a specific unitary transformation usually defined as $\ket{i}\ket{z}\ket{w}\rightarrow \ket{i}\ket{z+x_i\pmod{2}}\ket{w}$ where the $\ket{i}$ register indicates the index of the variable we are querying, $\ket{z}$ is the output register, and $\ket{w}$ is some auxiliary work-space. It can be interpreted as a sequence of control-not transformations such that we apply inversion operation (X-gate) to the second register that contains $\ket{z}$ in a case of the first register equals $i$ and the variable $x_i=1$. We interpret the oracle access transformation as $N$ such controlled transformations for each $i\in\{1,\dots,N\}$.   

An algorithm in the query model consists of alternating applications of arbitrary unitaries independent of the input and the query unitary, and a measurement in the end. The smallest number of queries for an algorithm that outputs $f(x)$ with a probability that is at least $\frac{2}{3}$ on all $x$ is called the quantum query complexity of the function $f$ and is denoted by $Q(f)$. We refer the readers to \cite{nc2010,a2017} for more details on quantum computing. 

In this paper's quantum algorithms, we refer to the quantum query complexity as the quantum running time. We use modifications of Grover's search algorithm \cite{g96,bbht98} as quantum subroutines. For these subroutines, time complexity is more than query complexity for additional log factor.

\section{Algorithm}\label{sec:algo}

We discuss our algorithm in this section.

Firstly, let us reformulate the problem in the Graph form.

Let us construct a full directed weighted graph $G=(V,E)$ by the sequence $S$. A vertex $v^i$ corresponds to the string $s^i$ for $i\in\{1,\dots,n\}$. Set of vertexes is $V=(v^1,\dots,v^n)$. The weight of an edge between two vertexes $v^i$ and $v^j$ is the length of maximal overlap for $s^i$ and $s^j$. Formally, $w(i,j)=\max\{z: s^i[k-z+1,k]=s^j[1,z]\}$. We can see that any path that visits all vertexes exactly once represents a superstring. Let the weight of the path be the sum of weights of all edges the belongs to a path. The path that visits all vertexes exactly once and has maximal weight represents the shortest superstring.
Let $P=(v^{i_1},\dots,v^{i_\ell})$ be a path. Let a weight of the path $P$ be $w(P)=w(v^{i_1},v^{i_2})+\dots+w(v^{i_{\ell-1}},v^{i_{\ell}})$, let $|P|=\ell$.

Let us present two procedures:
\begin{itemize}
    \item $\textsc{ConstructTheGraph}(S)$ constructs the graph $G=(V,E)$ by $S$. The implementation of the procedure is presented in Algorithm \ref{alg:constr}. 
    \item $\textsc{ConstructSuperstringByPath}(P)$ constructs the target superstring by a path $P$ in the graph $G=(V,E)$. Implementation of the procedure is presented in Algorithm \ref{alg:getpath}.
\end{itemize}
\begin{algorithm}
\caption{Implementation of $\textsc{ConstructTheGraph}(S)$ for $S=(s^1,\dots,s^n)$, $|s^i|=k$.}\label{alg:constr}
\begin{algorithmic}
\State $V=(v^1,\dots,v^n)$
\For{$i\in\{1,\dots,n\}$}
\For{$j\in\{1,\dots,n\}$}
\If{$i\neq j$}
\State $maxOverlap\gets 0$
\For{$z\in\{1,\dots,k\}$}
\If{$s^i[n-z+1,n]=s^j[1,z]$}
\State $maxOverlap\gets z$
\EndIf
\EndFor
\State $E\gets E\cup (v^i,v^j)$
\State $w(v^i,v^j)\gets maxOverlap$
\EndIf
\EndFor
\EndFor
\State \Return{$(V,E)$}
\end{algorithmic}
\end{algorithm}

\begin{algorithm}
\caption{Implementation of $\textsc{ConstructSuperstringByPath}(P)$ for $P=(v^{i_1},\dots,v^{i_\ell})$.}\label{alg:getpath}
\begin{algorithmic}
\State $t=s^{i_1}$
\For{$j\in\{2,\dots,\ell\}$}
\State $t\gets t\circ s^{i_j}[w(v^{i_{j-1}},v^{i_{j-1}})+1,k]$\Comment{Here $\circ$ is the concatenation operation.}
\EndFor
\State \Return{$t$}
\end{algorithmic}
\end{algorithm}

Let us consider a function $L:2^V\times V\times V\to \mathbb{R}$ where $2^V$ is the set of all subsets of $V$. The function $L$ is such that $L(S,v,u)$ is the maximum of all weights of paths that visit all vertexes of $S$ exactly ones, start from the vertex $v$, and finish in the vertex $u$. If there is no such path, then we assume that $L(S,v,u)=-\infty$.

Let the function $F:2^V\times V\times V\to V^*$ be such that $F(S,v,u)$ is the path that visit all vertexes of $S$ exactly once, starts from the vertex $v$, finishes in the vertex $u$ and has maximal weight. In other words for $P=F(S,u,v)$ we have $w(P)=L(S,u,v)$.

We assume, that $L(\{v\},v,v)=0$  and $F(\{v\},v,v)=(v)$ for any $v\in V$ by definition.

Let us discuss the properties of the function.

\begin{property} Suppose $S\subset V, v,u\in V$, an integer $k\leq |S|$. The function $L$ is such that
\[L(S,v,u)=\max\limits_{S'\subset S,|S'|=k,y\in S'}\left(L(S',v,y)+L((S\backslash S') \cup \{y\},y,u)\right)\]
and $F(S,u,v)$ is the path that is concatenation of corresponding paths.
\end{property}
\begin{proof}
Let $P^1=F(S',v,y)$ and $P^2=F((S\backslash S') \cup \{y\},y,u)$. The path $P=P^1\circ P^2$ belongs to $S'$, starts from $v$ and finishes in $u$, where $\circ$ means concatenation of paths excluding the duplication of common vertex $v$. Because of definition of $L$, we have $L(S,v,u)\geq w(P)$.

Assume that there is a path $T=(v^{i_1},\dots,v^{i_\ell})$ such that $w(T)=L(S,v,u)$ and $w(T)>w(P)$. Let us select $S''$ such that $|S''|=k, S''\subset S$ and there is $j<|T|$ such that $R^1=v^{i_1},\dots,v^{i_j}\in S''$ and $R^2=v^{i_j},v^{i_{j+1}},\dots,v^{i_\ell}\not\in S''\backslash\{v_j\}$. Then $w(R^1)\leq w(P^1)$ and $w(R^2)\leq w(P^2)$ by definition of $L$. Therefore, $w(R)=w(R^1)+w(R^2)-1\leq w(P^1)+w(P^2)-1=w(P)$. We obtain a contradiction with assumption.
\end{proof}

As a corollary we obtain the following result:
\begin{corollary}\label{cor:one-edge}
Suppose $S\subset V, v,u\in V$, ${\cal I}(u)$ is the set of all neighbor vertexes of $u$. The function $L$ is such that
\[L(S,v,u)=\max\limits_{y\in S\backslash\{u\}, y\in {\cal I}(u)}\left(L( S\backslash\{u\},v,y)+w(y,u)\right).\]
and $F(S,u,v)$ is the path that is the corresponding path.
\end{corollary}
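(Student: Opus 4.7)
The plan is to apply the Property with the finest non-trivial split, namely $k = |S|-1$, so that the second half of the path collapses to a single edge. Every $S' \subseteq S$ with $|S'| = k$ has the form $S \setminus \{x\}$ for a unique $x \in S$, and then $(S\setminus S')\cup\{y\} = \{x,y\}$. Substituting into the Property gives
\[
L(S,v,u) \;=\; \max_{x \in S,\ y \in S\setminus\{x\}} \Bigl(L(S\setminus\{x\}, v, y) + L(\{x,y\}, y, u)\Bigr).
\]

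Next, I would argue that only $x = u$ can contribute a finite value. If $x \neq u$ and $y \neq u$, then $u \notin \{x,y\}$, so no path from $y$ to $u$ visits $\{x,y\}$ exactly once and ends at $u$, forcing $L(\{x,y\}, y, u) = -\infty$. The subcase $y = u$ does not occur because $y \in S \setminus \{x\}$ would then force $x \neq u$ again, and a path from $u$ to $u$ visiting $\{x,u\}$ exactly once is impossible. Hence the outer max collapses to the single choice $S' = S \setminus \{u\}$, with $y$ ranging over $S \setminus \{u\}$.

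For those remaining terms, $L(\{u,y\}, y, u)$ is the weight of the one-edge path $y \to u$, which equals $w(y,u)$. Restricting $y$ to ${\cal I}(u)$ is harmless: in the complete graph produced by \textsc{ConstructTheGraph} every other vertex is a neighbor, and in any case vertices outside ${\cal I}(u)$ contribute edges that the max would never prefer. Plugging this into the displayed identity yields the claim for $L$. The assertion about $F$ follows at once: if $y^\ast$ attains the maximum, then the concatenation of the optimal path $F(S\setminus\{u\}, v, y^\ast)$ with the edge $(y^\ast, u)$ is a feasible path from $v$ to $u$ visiting $S$ exactly once whose weight realizes $L(S,v,u)$, so it must equal $F(S,v,u)$.

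The main obstacle I anticipate is nothing deep, but rather the bookkeeping with the degenerate subcases (in particular ensuring that all choices with $x \neq u$ really do give $L(\{x,y\}, y, u) = -\infty$) so that the reduction from the general split in the Property to a single trailing edge is fully justified; once that check is in place, the corollary is just a direct specialization.
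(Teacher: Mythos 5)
Your derivation is correct and matches the paper's (implicit) argument: the corollary is exactly the Property specialized to $k=|S|-1$, where $S'=S\setminus\{x\}$, the second factor $L(\{x,y\},y,u)$ is finite only when $x=u$, and in that case it equals the single-edge weight $w(y,u)$. The only unstated caveat, which the paper also glosses over, is the degenerate base case $S=\{v\}$ with $u=v$, where the right-hand maximum is taken over an empty set while $L(\{v\},v,v)=0$ by convention.
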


Using this idea, we construct the following algorithm.

{\bf Step 1}. Let $\alpha=0.055$. We classically compute $L(S,v,u)$ and $F(S,v,u)$ for all $|S|\leq (1-\alpha)\frac{n}{4}$ and $v,u\in S$ 

{\bf Step 2}. Let $V_4\subset V$ be such that $|V_4|=\frac{n}{4}$. Then, we have 

\[L(V_4,u,v)=\max\limits_{V_{\alpha}\subset V_4,|V_{\alpha}|=(1-\alpha)n/4,y\in V_{\alpha}}\left(L(V_{\alpha},v,y)+L((V_4\backslash V_{\alpha}) \cup \{y\},y,u)\right).\]

Let $V_2\subset V$ be such that $|V_2|=\frac{n}{2}$. Then, we have

\[L(V_2,u,v)=\max\limits_{V_4\subset V_2,|V_4|=n/4,y\in V_4}\left(L(V_4,v,y)+L((V_2\backslash V_4) \cup \{y\},y,u)\right).\]

Finally,
\[L(V,u,v)=\max\limits_{V_2\subset V,|V_2|=n/2,y\in V_2}\left(L(V_2,v,y)+L((V\backslash V_2) \cup \{y\},y,u)\right).\]

We can compute $L(V,u,v)$ and corresponding $F(V,u,v)$ using three nested procedures for maximum finding. As such procedure, we use Durr-Hoyer \cite{dh96,dhhm2004} quantum minimum finding algorithm.

Note that the error probability for the Durr-Hoyer algorithm is at most $0.1$. So, we use the standard boosting technique to decrease the total error probability to constant by $O(n)$ repetition of the maximum finding algorithm in each level.

Let us present the implementation of Step 1. Assume that ${\cal I}(u)$ is the sequence of neighbor vertexes for $u$.
Let us present a recursive function $\textsc{GetL}(S,v,u)$ for $S\subset V,u,v\in V$ with cashing that is Dynamic Programming approach in fact. The function is based on Corollary \ref{cor:one-edge}.

\begin{algorithm}
\caption{$\textsc{GetL}(S,v,u)$.}
\begin{algorithmic}
\If{$v=u$ and $S=\{v\}$}\Comment{Initialization}
\State $L(\{v\},v,v)\gets 0$
\State $F(\{v\},v,v)\gets (v)$
\EndIf
\If {$L(S,v,u)$ is not computed}
\State $weight\gets -\infty$
\State $path\gets ()$
\For{$y \in {\cal I}(u)$}
\If{$y\in S\backslash\{u\}$ and $\textsc{GetL}( S\backslash\{u\},v,y)+w(y,u)>weight$}
\State $weight\gets L( S\backslash\{u\},v,y)+w(y,u)$
\State $path\gets F( S\backslash\{u\},v,y)\cup u$
\EndIf
\EndFor
\State $L(S,v,u)\gets weight$
\State $F(S,v,u)\gets path$
\EndIf
\State \Return{$L(S,v,u)$}
\end{algorithmic}
\end{algorithm}

\begin{algorithm}
\caption{$\textsc{Step1}$.}
\begin{algorithmic}
\For{$S \in 2^V$ such that $|S|\leq (1-\alpha)\frac{n}{4}$}
\For{$v\in V$}
\For{$u\in V$}
\If{$v \in S$ and $u\in S$}
\State $\textsc{GetL}(S,v,u)$\Comment{We are computing $L(S,v,u)$ and $F(S,v,u)$ but we are not needing these results at the moment. We need it for Step 2.}
\EndIf
\EndFor
\EndFor
\EndFor
\end{algorithmic}
\end{algorithm}

Let $\textsc{QMax}((x_1,\dots,x_N))$ be the implementation of the quantum maximum finding algorithm \cite{dh96} for a sequence $x_1,\dots,x_N$.

The most nested quantum maximum finding algorithm for some $V_4\subset V, |V_4|=\frac{n}{4}$ and $u,v\in V$ is \[\textsc{QMax}((L(V_{\alpha},v,y)+L((V_4\backslash V_{\alpha}) \cup \{y\},y,u):V_{\alpha}\subset V_4,|V_{\alpha}|=(1-\alpha)\frac{n}{4},y\in V_{\alpha}))\]

The middle quantum maximum finding algorithm for some $V_2\subset V, |V_2|=\frac{n}{2}$ and $u,v\in V$ is 

\[\textsc{QMax}((L(V_{4},v,y)+L((V_2\backslash V_{4}) \cup \{y\},y,u):V_4\subset V_2,|V_4|=n/4,y\in V_4))\]

Note that $|V_4|=n/4$ and $|V_2\backslash V_{4}|=n/4$. We use the invocation of $\textsc{QMax}$ (the most nested quantum maximum finding algorithm) instead of $L(V_{4},v,y)$  and $L(V_2\backslash V_{4},y,u)$.

The final quantum maximum finding algorithm for some  $u,v\in V$ is 

\[\textsc{QMax}((L(V_2,v,y)+L((V\backslash V_2) \cup \{y\},y,u):V_2\subset V,|V_2|=n/2,y\in V_2))\]

Note that $|V_2|=n/2$ and $|V\backslash V_2|=n/2$. We use the invocation of $\textsc{QMax}$ (the middle quantum maximum finding algorithm) instead of $L(V_2,v,y)$ and $L((V\backslash V_2) \cup \{y\},y,u)$.

The procedure $\textsc{QMax}$ returns not only the maximal value, but the index of the target element. Therefore, by the ``index'' we can obtain the target paths using $F$ function. So, the result path is
$P=P^1\circ P^2$, where $P^1$ is the result path for $L(V_2,v,y)$ and $P^2$ is the result path for $L((V\backslash V_2) \cup \{y\},y,u)$.

$P^1=P^{1,1}\circ P^{1,2}$, where $P^{1,1}$ is the result path for $L(V_{4},v,y)$ and $P^{1,2}$ is the result path for $L((V_2\backslash V_{4}) \cup \{y\},y,u)$. By the same way we can construct $P^2=P^{2,1}\circ P^{2,2}$.

$P^{1,1}=P^{1,1,1}\circ P^{1,1,2}$, where $P^{1,1,1}$ is the result path for $L(V_{\alpha},v,y)$ and $P^{1,1,2}$ is the result path for $L((V_4\backslash V_{\alpha}) \cup \{y\},y,u)$. Note, that this values were precomputed classically on Step 1, and were have stored them in $F(V_{\alpha},v,y)$ and $F((V_4\backslash V_{\alpha}) \cup \{y\},y,u)$ respectively.

By the same way we can construct
\[P^{1,2}=P^{1,2,1}\circ P^{1,2,2},\]
\[P^{2,1}=P^{2,1,1}\circ P^{2,1,2},\]
\[P^{2,2}=P^{2,2,1}\circ P^{2,2,2}.\]

The final Path is
\[P=P^1\circ P^2=(P^{1,1}\circ P^{1,2})\circ(P^{2,1}\circ P^{2,2})=\]
\[\Big((P^{1,1,1}\circ P^{1,1,2})\circ (P^{1,2,1}\circ P^{1,2,2})\Big)\circ\Big((P^{2,1,1}\circ P^{2,1,2})\circ (P^{2,2,1}\circ P^{2,2,2})\Big)\]

Let us present the final algorithm as Algorithm \ref{alg:main}.
\begin{algorithm}
\caption{Algorithm for SSP.}\label{alg:main}
\begin{algorithmic}
\State $(V,E)\gets\textsc{ConstructTheGraph}(S)$
\State $\textsc{Step1}()$
\State $weight\gets -\infty$
\State $path \gets ()$
\For{$v\in V$}
\For{$u\in V$}
\State $currentWeight\gets\textsc{QMax}((L(V_2,v,y)+L((V\backslash V_2) \cup \{y\},y,u):V_2\subset V,|V_2|=n/2,y\in V_2))$
\If{$weight<currentWeight$}
\State $weight\gets currentWeight$
\State $path\gets \Big((P^{1,1,1}\circ P^{1,1,2})\circ (P^{1,2,1}\circ P^{1,2,2})\Big)\circ\Big((P^{2,1,1}\circ P^{2,1,2})\circ (P^{2,2,1}\circ P^{2,2,2})\Big)$
\EndIf
\EndFor
\EndFor
\State $t\gets \textsc{ConstructSuperstringByPath}(path)$
\State \Return{t}
\end{algorithmic}
\end{algorithm}

The complexity of the algorithm is presented in the following theorem.
\begin{theorem}
 Algorithm \ref{alg:main} solves LTP with $O^*\left(1.728^n\right)$ running time and constant bounded error.
\end{theorem}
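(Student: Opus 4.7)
The plan is to prove correctness from the Property and Corollary~\ref{cor:one-edge}, then bound the cost of Step~1 and of the three nested $\textsc{QMax}$ calls of Step~2 separately, and close with a numerical check that $\alpha=0.055$ balances the two resulting exponents.

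\textbf{Correctness.} Step~1 fills the tables $L(S,v,u)$ and $F(S,v,u)$ for every $|S|\le(1-\alpha)n/4$ by the recurrence of Corollary~\ref{cor:one-edge}, which is exact. Applying the Property three times, with splits $k=n/2$, $k=n/4$, and $k=(1-\alpha)n/4$, expresses $L(V,v,u)$ as a three-level maximum whose leaves are exactly the Step~1 entries. Thus a correct maximum at every level returns the optimal $L(V,v,u)$, and the concatenation $P^{1,1,1}\circ\cdots\circ P^{2,2,2}$ built from the recorded $\textsc{QMax}$ indices is a maximum-overlap Hamiltonian path in $G$; $\textsc{ConstructSuperstringByPath}$ turns this path into a shortest superstring.

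\textbf{Running time.} With memoization, $\textsc{GetL}$ touches each triple $(S,v,u)$ at most once and does $O(n)$ work per visit, so Step~1 costs
\[
O^{*}\!\left(\sum_{i=0}^{(1-\alpha)n/4}\binom{n}{i}\right)=O^{*}\bigl(2^{H((1-\alpha)/4)\,n}\bigr)
\]
by the standard entropy bound on partial binomial sums (valid since $(1-\alpha)/4<1/2$; here $H$ is the binary entropy). A Durr-Hoyer search over $N$ candidates runs in $O^{*}(\sqrt{N})$ calls to its oracle; since the oracle of each outer $\textsc{QMax}$ is realized by running the next-inner $\textsc{QMax}$, the three costs multiply, and Step~2 totals
\[
O^{*}\!\left(\sqrt{\binom{n}{n/2}\,\binom{n/2}{n/4}\,\binom{n/4}{\alpha n/4}}\right)=O^{*}\bigl(2^{(3/4+H(\alpha)/8)\,n}\bigr).
\]

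\textbf{Balancing and error.} The two exponents coincide when $H((1-\alpha)/4)=3/4+H(\alpha)/8$. A direct numerical check at $\alpha=0.055$ gives both sides $\approx 0.7884<\log_{2}1.728$, so the total cost is $O^{*}(1.728^{n})$. The main obstacle will be the error analysis: a naive nesting of three Durr-Hoyer calls, each with failure probability up to $0.1$, lets errors compound and can spoil the outer search. Following the boosting remark already made in the paper, I would replace each $\textsc{QMax}$ by the best value observed over $O(n)$ independent repetitions, which drives each per-level failure probability to $2^{-\Omega(n)}$; a union bound over the polynomially many nested invocations then keeps the total error below $1/3$, and the extra $\operatorname{poly}(n)$ factor is absorbed by $O^{*}$.
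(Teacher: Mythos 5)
Your decomposition, cost accounting, and overall strategy coincide with the paper's own proof: the same three-level split, the same binomial bounds $O^{*}\bigl(\sum_{i}\binom{n}{i}\bigr)$ for Step~1 and $O^{*}\bigl(\sqrt{\binom{n}{n/2}\binom{n/2}{n/4}\binom{n/4}{\alpha n/4}}\bigr)$ for Step~2, and the same ``boost each level by $O(n)$ repetitions, then argue all nested answers are simultaneously correct'' plan for the error; your explicit entropy computation balancing the two exponents at $\alpha=0.055$ is in fact more careful than what the paper writes down.

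The one step that is wrong as written is the last sentence of your error analysis: the union bound is \emph{not} over polynomially many nested invocations. The outer $\textsc{QMax}$ runs Grover over all $\binom{n}{n/2}\cdot(n/2)$ pairs $(V_2,y)$, and its oracle evaluates the middle-level quantity coherently on each of them; for the Durr--Hoyer analysis to apply, the boosted middle-level computation must return the correct value on \emph{every} one of these exponentially many inputs simultaneously, and likewise each middle search needs all $\binom{n/2}{n/4}\cdot(n/4)$ of its inner values to be correct. So the union bound is over $2^{\Theta(n)}$ events, and a per-level failure probability of $2^{-\Omega(n)}$ with an unspecified constant does not suffice --- if the constant in the exponent were $1$, the bound against $\binom{n}{n/2}$ events would be vacuous. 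The repair is exactly the paper's computation: $2n$ repetitions of a $0.1$-error search give failure probability $0.1^{2n}=100^{-n}$, and $(1-100^{-n})^{\binom{n/2}{n/4}}\to 1$ because $\binom{n/2}{n/4}\le 2^{n/2}\ll 100^{n}$; one more round of the same argument with $(1-25^{-n})^{\binom{n}{n/2}}$ handles the outer level. With ``polynomially'' replaced by this explicit exponential count (and the repetition constant chosen large enough to beat it), your argument closes and matches the paper's.
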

\begin{proof}
The correctness of the algorithm follows from the above discussion. Let us present an analysis of running time.

Complexity of Step 1 (Classical preprocessing) is \[\sum_{i=1}^{(1-\alpha) \frac{n}{4}}O^*\left( \binom{i}{(1-\alpha) \frac{n}{4}}\right)=O^*(1.728^n).\]

Complexity of Step 2 (Quantum part) is complexity of three nested Durr-Hoyer maximum finding algorithms. Due to  \cite{dh96,g96,dhhm2004}, the complexity is

\[O^*\left(\sqrt{\binom{m}{n/2}}\cdot\sqrt{\binom{n/2}{n/4}}\cdot\sqrt{\binom{n/4}{\alpha n/4}}\right)=O^*(1.728^n).\]

Complexity of $\textsc{ConstructTheGraph}$ is $O(n^2k)$ and complexity of $\textsc{ConstructSuperstringByPath}$ is $O(n)$.

We invoke $\textsc{ConstructTheGraph}$, Step 1, Step 2 and $\textsc{ConstructSuperstringByPath}$  sequentially. Therefore, the total complexity is the sum of complexities for these steps. So, the total complexity is $O^*(1.728^n)$.

Only Step 2 has an error probability. The most nested invocation of the Durr-Hoyer algorithm has an error probability $0.1$. Let us repeat it $2n$ times and choose the maximal value among all invocations. The algorithm has an error only if all invocations have an error. Therefore, the error probability is $0.1^{2n}=100^{-n}$.

Let us consider the middle Durr-Hoyer algorithm's invocation. The probability of success is the probability of correctness of maximum finding and the probability of input correctness, i.e., the correctness of all the nested Durr-Hoyer algorithm's invocations. It is
\[0.9\cdot  (1-100^{-n})^{\gamma}\mbox{, where }\gamma= \binom{n/2}{n/4}\]
\[\geq 0.8\mbox{, for enough big }n.\] 
So, the error probability is at most $0.2$.

 Let us repeat the middle Durr-Hoyer algorithm $2n$ times and choose the maximal value among all invocations. Similar to the previous analysis, the error probability is $0.2^{2n}=25^{-n}$.
 
 Therefore, the total success probability that is the final Durr-Hoyer algorithm's success probability is the following one.
 \[0.9\cdot  (1-25^{-n})^{\beta}\mbox{, where }\beta= \binom{n}{n/2}\]
 \[>0.8\mbox{, for enough big }n.\]
 Therefore, the total error probability is at most $0.2$.
\end{proof}

\section{Conclusion}\label{sec:conclusion}
We present a quantum algorithm for the SSP or SCS problem.  It works faster than existing classical algorithms. At the same time, there are faster classical algorithms in the case of restricted length of strings \cite{gkm2013,gkm2014}. It is interesting to explore quantum algorithms for such a restriction. Can quantum algorithms be better than classical counterparts in this case?

Another open question is approximating algorithms for the problem. As we mentioned before, such algorithms are more useful in practice. So, it is interesting to investigate quantum algorithms that can be applied for practical cases.
%
%
%
 \bibliographystyle{splncs04}
 \bibliography{report}
\end{document}